
\documentclass[12pt]{amsart}%
\usepackage{amsmath}
\usepackage{graphicx}
\usepackage{amsfonts}
\usepackage{amssymb}%
\setcounter{MaxMatrixCols}{30}
\newtheorem{theorem}{Theorem}[section]
\newtheorem{lemma}[theorem]{Lemma}

\newtheorem{proposition}[theorem]{Proposition}

\theoremstyle{definition}
\newtheorem{definition}[theorem]{Definition}
\newtheorem{example}[theorem]{Example}
\theoremstyle{remark}

\numberwithin{equation}{section}

\begin{document}
\title[Thick Distributional Calculus]{Applications of the Thick Distributional Calculus}

\begin{abstract}
We give several applications of the thick distributional calculus. We consider
homogeneous thick distributions, point source fields, and higher order
derivatives of order $0.$

\end{abstract}
\author{Yunyun Yang}
\address{Department of Mathematics\\
Louisiana State University\\
Baton Rouge, LA 70803}
\email{yyang18@math.lsu.edu }
\author{Ricardo Estrada}
\email{restrada@math.lsu.edu}
\thanks{The authors gratefully acknowledge support from NSF, through grant number 0968448.}
\keywords{Thick points, delta functions, distributions, generalized functions}
\subjclass{46F10}
\maketitle

\section{Introduction\label{Section: Intro}}

The aim of this note is to give several applications of the recently
introduced calculus of thick distributions in several variables \cite{YE2},
generalizing the thick distributions of one variable \cite{EF07}. The thick
distributional calculus allows us to study problems where a finite number of
special points are present; it is the distributional version of the analysis
of Blanchet and Faye \cite{BlanchetFaye}, who employed the concepts of
Hadamard finite parts as developed by Sellier \cite{Sellier3} to study
dynamics of point particles in high post-Newtonian approximations of general
relativity. We give a short summary of the theory of thick distributions in
Section \ref{Section: Thick distributions}.

Our first application, given in Section \ref{Section; The distribution 1}, is
the computation of the distributional derivatives of homogeneous distributions
in $\mathbb{R}^{n}$ by first computing the thick distributional derivatives
and then projecting onto the space of standard distributions. Our analysis
makes several delicate points quite clear.

Next, in Section \ref{Section; Bowen}, we consider an application to point
source fields. In \cite{Bowen}, Bowen computed the derivative of the
distribution
\begin{equation}
g_{j_{1},\ldots,j_{k}}\left(  \mathbf{x}\right)  =\frac{n_{j_{1}}\cdots
n_{j_{k}}}{r^{2}}\,, \label{IN.1}%
\end{equation}
of $\mathcal{D}^{\prime}\left(  \mathbb{R}^{3}\right)  ,$ where $r=\left\vert
\mathbf{x}\right\vert $ and $\mathbf{n=}\left(  n_{i}\right)  $ is the unit
normal vector to a sphere centered at the origin, that is, $n_{i}=x_{i}/r.$
His result can be written as\footnote{Following the notation introduced by the
late Professor Farassat \cite{Farassat}, we shall denote distributional
derivatives with an overbar.}%
\begin{equation}
\frac{\overline{\partial}}{\partial x_{i}}g_{j_{1},\ldots,j_{k}}=\left\{
\sum_{q=1}^{k}\delta_{ij_{q}}\frac{n_{j_{1}}\cdots n_{j_{k}}}{n_{j_{q}}%
}-\left(  k+2\right)  n_{i}n_{j_{1}}\cdots n_{j_{k}}\right\}  \frac{1}{r^{3}%
}+A\delta\left(  \mathbf{x}\right)  \,, \label{IN.2}%
\end{equation}
where $n_{i}n_{j_{1}}\cdots n_{j_{k}}=n_{1}^{a}n_{2}^{b}n_{3}^{c},$ and $A=0$
if $a,$ $b,$ or $c$ is odd, while%
\begin{equation}
A=\frac{2\Gamma\left(  \left(  a+1\right)  /2\right)  \Gamma\left(  \left(
b+1\right)  /2\right)  \Gamma\left(  \left(  c+1\right)  /2\right)  }%
{\Gamma\left(  \left(  a+b+c+3\right)  /2\right)  }\,, \label{IN.3}%
\end{equation}
if the three exponents are even. Interestingly, he observes that if one tries
to compute this formula by induction, employing the product rule for
derivatives, the result obtained is \emph{wrong.} In this article we show that
one can actually apply the product rule in the space of thick distributions,
obtaining (\ref{IN.2}) by induction; furthermore, our analysis shows
\emph{why} the wrong result is obtained when applying the product rule in
\cite{Bowen}.

Finally in Section \ref{Section: Higher order derivatives} we show how the
thick distributional calculus allows one to avoid mistakes in the computation
of higher order derivatives of thick distributions of order $0.$

\section{Thick distributions\label{Section: Thick distributions}}

We now recall the basic ideas of the thick distributional calculus \cite{YE2}.
If $\mathbf{a}$ is a fixed point of $\mathbb{R}^{n},$ then the space of test
functions with a thick point at $\mathbf{x=a}$ is defined as
follows.\smallskip

\begin{definition}
\label{Definition Test}Let $\mathcal{D}_{\ast,\mathbf{a}}\left(
\mathbb{R}^{n}\right)  $ denote the vector space of all smooth functions
$\phi$ defined in $\mathbb{R}^{n}\setminus\left\{  \mathbf{a}\right\}  ,$ with
support of the form $K\setminus\left\{  \mathbf{a}\right\}  ,$ where $K$ is
compact in $\mathbb{R}^{n},$ that admit a strong asymptotic expansion of the
form
\begin{equation}
\phi\left(  \mathbf{a}+\mathbf{x}\right)  =\phi\left(  \mathbf{a}%
+r\mathbf{w}\right)  \sim\underset{j=m}{\overset{\infty}{\sum}}a_{j}\left(
\mathbf{w}\right)  r^{j},\ \ \ \text{as }\mathbf{x}\rightarrow\mathbf{0}\,,
\label{Def.1}%
\end{equation}
where $m$ is an integer (positive or negative), and where the $a_{j}$ are
smooth functions of $\mathbf{w},$ that is, $a_{j}\in\mathcal{D}\left(
\mathbb{S}\right)  .$ The subspace $\mathcal{D}_{\ast,\mathbf{a}}^{\left[
m\right]  }\left(  \mathbb{R}^{n}\right)  $ consists of those test functions
$\phi$ whose expansion (\ref{Def.1}) begins at $m.$ For a fixed compact $K$
whose interior contains $\mathbf{a},$ $\mathcal{D}_{\ast,\mathbf{a}}^{\left[
m;K\right]  }\left(  \mathbb{R}^{n}\right)  $ is the subspace formed by those
test functions of $\mathcal{D}_{\ast,\mathbf{a}}^{\left[  m\right]  }\left(
\mathbb{R}^{n}\right)  $ that vanish in $\mathbb{R}^{n}\setminus K.$\smallskip
\end{definition}

Observe that we require the asymptotic development of $\phi\left(
\mathbf{x}\right)  $ as $\mathbf{x}\rightarrow\mathbf{a}$ to be
\textquotedblleft strong\textquotedblright. This means \cite[Chapter
1]{GreenBook}\ that for any differentiation operator $\left(  \partial
/\partial\mathbf{x}\right)  ^{\mathbf{p}}=\left(  \partial^{p_{1}}%
...\partial^{p_{n}}\right)  /\partial x_{1}^{p_{1}}...\partial x_{n}^{p_{n}},$
the asymptotic development of $\left(  \partial/\partial\mathbf{x}\right)
^{\mathbf{p}}\phi\left(  \mathbf{x}\right)  $ as $\mathbf{x}\rightarrow
\mathbf{a}$ exists and is equal to the term-by-term differentiation of
$\sum_{j=m}^{\infty}a_{j}\left(  \mathbf{w}\right)  r^{j}.$ Observe that
saying that the expansion exists as $\mathbf{x}\rightarrow\mathbf{0}$ is the
same as saying that it exits as $r\rightarrow0,$ uniformly with respect to
$\mathbf{w}.$

We call $\mathcal{D}_{\ast,\mathbf{a}}\left(  \mathbb{R}^{n}\right)  $ the
space of test functions on $\mathbb{R}^{n}$ with a thick point located
at\textbf{\ }$\mathbf{x}=\mathbf{a}.$ We denote $\mathcal{D}_{\ast,\mathbf{0}%
}\left(  \mathbb{R}^{n}\right)  $ as $\mathcal{D}_{\ast}\left(  \mathbb{R}%
^{n}\right)  .$

The topology of the space of thick test functions is constructed as
follows.\smallskip

\begin{definition}
\label{DefinitionTopology}Let $m$ be a fixed integer and $K$ a compact subset
of $\mathbb{R}^{n}$ whose interior contains $\mathbf{a}.$ The topology of
$\mathcal{D}_{\ast,\mathbf{a}}^{\left[  m;K\right]  }\left(  \mathbb{R}%
^{n}\right)  $ is given by the seminorms $\left\{  \left\Vert
\ \ \ \ \right\Vert _{q,s}\right\}  _{q>m,s\geq0}$ defined as
\begin{equation}
\left\vert \left\vert \phi\right\vert \right\vert _{q,s}=\sup_{\mathbf{x-a}\in
K}\sup_{\left\vert \mathbf{p}\right\vert \leq s}\frac{\left\vert
\displaystyle\frac{\partial^{\mathbf{p}}\phi}{\partial\mathbf{x}}\left(
\mathbf{a+x}\right)  -\underset{j=m-\left\vert \mathbf{p}\right\vert
}{\overset{q-1}{\sum}}a_{j,\mathbf{p}}\left(  \mathbf{w}\right)
r^{j}\right\vert }{r^{q}}\,, \label{def 2}%
\end{equation}
where $\mathbf{x}=r\mathbf{w}$ and%
\begin{equation}
\displaystyle\frac{\partial^{\mathbf{p}}\phi}{\partial\mathbf{x}}\left(
\mathbf{a+x}\right)  \sim\underset{j=m-\left\vert \mathbf{p}\right\vert
}{\overset{\infty}{\sum}}a_{j,\mathbf{p}}\left(  \mathbf{w}\right)  r^{j}.
\label{def 2b}%
\end{equation}

The topology of $\mathcal{D}_{\ast,\mathbf{a}}^{\left[  m\right]  }\left(
\mathbb{R}^{n}\right)  $ is the inductive limit topology of the $\mathcal{D}%
_{\ast,\mathbf{a}}^{\left[  m;K\right]  }\left(  \mathbb{R}^{n}\right)  $ as
$K\nearrow\infty.$ The topology of $\mathcal{D}_{\ast,\mathbf{a}}\left(
\mathbb{R}^{n}\right)  $ is the inductive limit topology of the $\mathcal{D}%
_{\ast,\mathbf{a}}^{\left[  m\right]  }\left(  \mathbb{R}^{n}\right)  $ as
$m\searrow-\infty.$\smallskip
\end{definition}

A sequence $\left\{  \phi_{l}\right\}  _{l=0}^{\infty}$ in $\mathcal{D}%
_{\ast,\mathbf{a}}\left(  \mathbb{R}^{n}\right)  $ converges to $\psi$ if and
only there exists $l_{0}\geq0,$ an integer $m,$ and a compact set $K$ with
$\mathbf{a}$ in its interior, such that $\phi_{l}\in\mathcal{D}_{\ast
,\mathbf{a}}^{\left[  m;K\right]  }\left(  \mathbb{R}^{n}\right)  $ for $l\geq
l_{0}$ and $\left\vert \left\vert \psi-\phi_{l}\right\vert \right\vert
_{q,s}\rightarrow0$ as $l\rightarrow\infty$ if $q>m,s\geq0.$ Notice that if
$\left\{  \phi_{l}\right\}  _{l=0}^{\infty}$ converges to $\psi$ in
$\mathcal{D}_{\ast,\mathbf{a}}\left(  \mathbb{R}^{n}\right)  $ then $\phi_{l}$
and the corresponding derivatives converge uniformly to $\psi$ and its
derivatives in any set of the form $\mathbb{R}^{n}\setminus B,$ where $B$ is a
ball with center at $\mathbf{a};$ in fact, $r^{\left\vert \mathbf{p}%
\right\vert -m}\left(  \partial/\partial\mathbf{x}\right)  ^{\mathbf{p}}%
\phi_{l}$ converges uniformly to $r^{\left\vert \mathbf{p}\right\vert
-m}\left(  \partial/\partial\mathbf{x}\right)  ^{\mathbf{p}}\psi$ over all
$\mathbb{R}^{n}.$ Furthermore, if $\left\{  a_{j}^{l}\right\}  $ are the
coefficients of the expansion of $\phi_{l}$ and $\left\{  b_{j}\right\}  $ are
those for $\psi,$ then $a_{j}^{l}\rightarrow b_{j}$ in the space
$\mathcal{D}\left(  \mathbb{S}\right)  $ for each $j\geq m.$

We can now consider distributions in a space with one thick point, the
\textquotedblleft thick distributions.\textquotedblright\smallskip

\begin{definition}
The space of distributions on $\mathbb{R}^{n}$ with a thick point at
$\mathbf{x=a}$ is the dual space of $\mathcal{D}_{\ast,\mathbf{a}}\left(
\mathbb{R}^{n}\right)  .$ We denote it $\mathcal{D}_{\ast,\mathbf{a}%
}^{^{\prime}}\left(  \mathbb{R}^{n}\right)  ,$ or just as $\mathcal{D}_{\ast
}^{^{\prime}}\left(  \mathbb{R}^{n}\right)  $ when $\mathbf{a=0}.$\smallskip
\end{definition}

Observe that $\mathcal{D}\left(  \mathbb{R}^{n}\right)  ,$\ the space of
standard test functions, is a closed subspace of $\mathcal{D}_{\ast
,\mathbf{a}}\left(  \mathbb{R}^{n}\right)  ;$ we denote by%
\begin{equation}
i:\mathcal{D}\left(  \mathbb{R}^{n}\right)  \rightarrow\mathcal{D}%
_{\ast,\mathbf{a}}\left(  \mathbb{R}^{n}\right)  \,, \label{Test.3}%
\end{equation}
the inclusion map and by%
\begin{equation}
\Pi:\mathcal{D}_{\ast,\mathbf{a}}^{\prime}\left(  \mathbb{R}^{n}\right)
\rightarrow\mathcal{D}^{\prime}\left(  \mathbb{R}^{n}\right)  \,,
\label{Dis.1}%
\end{equation}
the projection operator, dual of the inclusion (\ref{Test.3}).

The derivatives of thick distributions are defined in much the same way as the
usual distributional derivatives, that is, by duality.\smallskip

\begin{definition}
\label{Definition: Derivaative}If $f\in\mathcal{D}_{\ast,\mathbf{a}}^{\prime
}\left(  \mathbb{R}^{n}\right)  $ then its thick distributional derivative
$\partial^{\ast}f/\partial x_{j}$ is defined as%
\begin{equation}
\left\langle \frac{\partial^{\ast}f}{\partial x_{j}},\phi\right\rangle
=-\left\langle f,\frac{\partial\phi}{\partial x_{j}}\right\rangle
\,,\ \ \ \ \ \phi\in\mathcal{D}_{\ast,\mathbf{a}}\left(  \mathbb{R}%
^{n}\right)  \,. \label{Der.1}%
\end{equation}

\end{definition}

We denote by $\mathcal{E}_{\ast}\left(  \mathbb{R}^{n}\right)  $ the space of
smooth functions in $\mathbb{R}^{n}\setminus\left\{  \mathbf{a}\right\}  $
that have a strong asymptotic expansion of the form (\ref{Def.1});
alternatively, $\psi\in\mathcal{E}_{\ast}\left(  \mathbb{R}^{n}\right)  $\ if
$\psi=\psi_{1}+\psi_{2},$ where $\psi_{1}\in\mathcal{E}\left(  \mathbb{R}%
^{n}\right)  $\footnote{In general $\mathcal{E}\left(  U\right)  $ is the
space of all smooth functions in the open set $U.$} and where $\psi_{2}%
\in\mathcal{D}_{\ast}\left(  \mathbb{R}^{n}\right)  .$ The space
$\mathcal{E}_{\ast}\left(  \mathbb{R}^{n}\right)  $ is the space of
\emph{multipliers} of $\mathcal{D}_{\ast}\left(  \mathbb{R}^{n}\right)  $ and
of $\mathcal{D}_{\ast}^{\prime}\left(  \mathbb{R}^{n}\right)  .$ Furthermore
\cite{YE2}, the product rule for derivatives holds,%
\begin{equation}
\frac{\partial^{\ast}\left(  \psi f\right)  }{\partial x_{j}}=\frac
{\partial\psi}{\partial x_{j}}f+\psi\frac{\partial^{\ast}f}{\partial x_{j}}\,,
\label{Der.3}%
\end{equation}
if $f$ is a thick distribution and $\psi$ is a multiplier. Notice that
$\partial\psi/\partial x_{j}$ is the \emph{ordinary} derivative in
(\ref{Der.3}).

Let $g\left(  \mathbf{w}\right)  $ is a distribution in $\mathbb{S}.$ The
thick delta function of degree $q,$ denoted as $g\delta_{\ast}^{\left[
q\right]  },$ or as $g\left(  \mathbf{w}\right)  \delta_{\ast}^{\left[
q\right]  },$ acts on a thick test function $\phi\left(  \mathbf{x}\right)  $
as
\begin{equation}
\left\langle g\delta_{\ast}^{\left[  q\right]  },\phi\right\rangle
_{\mathcal{D}_{\ast}^{\prime}\left(  \mathbb{R}^{n}\right)  \times
\mathcal{D}_{\ast}\left(  \mathbb{R}^{n}\right)  }=\frac{1}{C}\left\langle
g\left(  \mathbf{w}\right)  ,a_{q}\left(  \mathbf{w}\right)  \right\rangle
_{\mathcal{D}^{\prime}\left(  \mathbb{S}\right)  \times\mathcal{D}\left(
\mathbb{S}\right)  }\,, \label{Dis.15}%
\end{equation}
where $\phi\left(  r\mathbf{w}\right)  \sim\sum_{j=m}^{\infty}a_{j}\left(
\mathbf{w}\right)  r^{j},$ as $r\rightarrow0^{+},$ and where
\begin{equation}
C=\frac{2\pi^{n/2}}{\Gamma\left(  n/2\right)  }\,, \label{Area}%
\end{equation}
is the surface area of the unit sphere $\mathbb{S}$ of $\mathbb{R}^{n}.$ If
$g$ is locally integrable function in $\mathbb{S},$ then
\begin{equation}
\left\langle g\delta_{\ast}^{\left[  q\right]  },\phi\right\rangle
_{\mathcal{D}_{\ast}^{\prime}\left(  \mathbb{R}^{n}\right)  \times
\mathcal{D}_{\ast}\left(  \mathbb{R}^{n}\right)  }=\frac{1}{C}\int
_{\mathbb{S}}g\left(  \mathbf{w}\right)  a_{q}\left(  \mathbf{w}\right)
\,\mathrm{d}\sigma\left(  \mathbf{w}\right)  \,. \label{Dis.16}%
\end{equation}
Thick deltas of order $0$ are called just thick deltas, and we shall use the
notation $g\delta_{\ast}$ instead of $g\delta_{\ast}^{\left[  0\right]  }.$

Let $g\in\mathcal{D}^{\prime}\left(  \mathbb{S}\right)  .$ Then%
\begin{equation}
\frac{\partial^{\ast}}{\partial x_{j}}\left(  g\delta_{\ast}^{\left[
q\right]  }\right)  =\left(  \frac{\delta g}{\delta x_{j}}-\left(  q+n\right)
n_{j}g\right)  \delta_{\ast}^{\left[  q+1\right]  }. \label{DD.1}%
\end{equation}
Here $\delta g/\delta x_{j}$ is the $\delta-$derivative of $g$ \cite{EK1,
EK2}; in general the $\delta-$derivatives can be applied to functions and
distributions defined only on a smooth hypersurface $\Sigma$ of $\mathbb{R}%
^{n}.$ Suppose now that the surface is $\mathbb{S},$ the unit sphere in
$\mathbb{R}^{n}$ and let $f$ be a smooth function defined in $\mathbb{S},$
that is, $f\left(  \mathbf{w}\right)  $ is defined if $\mathbf{w}\in
\mathbb{R}^{n}$ satisfies $\left\vert \mathbf{w}\right\vert =1.$ Observe that
the expressions $\partial f/\partial x_{j}$ are not defined and, likewise, if
$\mathbf{w}=\left(  w_{j}\right)  _{1\leq j\leq n}$ the expressions $\partial
f/\partial w_{j}$ do not make sense either; the derivatives that are always
defined and that one should consider are the $\delta f/\delta x_{j},$ $1\leq
j\leq n.$ Let $F_{0}$\ be the extension of $f$\ to $\mathbb{R}^{n}%
\setminus\left\{  \mathbf{0}\right\}  $ that is homogeneous of degree $0,$
namely, $F_{0}\left(  \mathbf{x}\right)  =f\left(  \mathbf{x}/r\right)  $
where $r=\left\vert \mathbf{x}\right\vert .$ Then \cite{YE2}%
\begin{equation}
\frac{\delta f}{\delta x_{j}}=\left.  \frac{\partial F_{0}}{\partial x_{j}%
}\right\vert _{\mathbb{S}}\,. \label{Su.4}%
\end{equation}
Also, if we use polar coordinates, $\mathbf{x}=r\mathbf{w},$ so that
$F_{0}\left(  \mathbf{x}\right)  =f\left(  \mathbf{w}\right)  ,$ then
$\partial F_{0}/\partial x_{j}$ is homogeneous of degree $-1,$ and actually
$\partial F_{0}/\partial x_{j}=r^{-1}\delta f/\delta x_{j}$ if $\mathbf{x\neq
0.}$

The matrix $\mathbf{\mu=}\left(  \mu_{ij}\right)  _{1\leq i,j\leq n},$ where
$\mu_{ij}=\delta n_{i}/\delta x_{j},$ plays an important role in the study of
distributions on a surface $\Sigma.$ If $\Sigma=\mathbb{S}$ then $\mu
_{ij}=\delta n_{i}/\delta x_{j}=\delta_{ij}-n_{i}n_{j}.$ Observe that
$\mu_{ij}=\mu_{ji},$ an identity that holds in any surface.

The differential operators $\delta f/\delta x_{j}$\ are initially defined if
$f$ is a smooth function defined on $\Sigma,$ but we can also define them when
$f$ is a distribution. We can do this if we use the fact that smooth functions
are dense in the space of distributions on $\Sigma.$

\section{The thick distribution $\mathcal{P}f\left(  1\right)  $%
\label{Section; The distribution 1}}

Let us consider one of the simplest functions, namely, the function $1,$
defined in $\mathbb{R}^{n}.$ Naturally this function is locally integrable,
and thus it defines a regular distribution, also denoted as $1,$ and the
ordinary derivatives and the distributional derivatives both coincide and give
the value $0.$ On the other hand, $1$ does not automatically give an element
of $\mathcal{D}_{\ast}^{\prime}\left(  \mathbb{R}^{n}\right)  $ since if
$\phi\in\mathcal{D}_{\ast}\left(  \mathbb{R}^{n}\right)  $ the integral
$\int_{\mathbb{R}^{n}}\phi\left(  \mathbf{x}\right)  \,\mathrm{d}\mathbf{x}$
could be divergent, and thus we consider the \emph{spherical} finite
part\footnote{If instead of removing balls of radius $\varepsilon,$ solids of
other shapes are removed one obtains a different thick distribution.} thick
distribution $\mathcal{P}f\left(  1\right)  $ given as%
\begin{equation}
\left\langle \mathcal{P}f\left(  1\right)  ,\phi\right\rangle =\mathrm{F.p.}%
\int_{\mathbb{R}^{n}}\phi\left(  \mathbf{x}\right)  \,\mathrm{d}%
\mathbf{x=}\mathrm{F.p.}\lim_{\varepsilon\rightarrow0^{+}}\int_{\left\vert
\mathbf{x}\right\vert \geq\varepsilon}\phi\left(  \mathbf{x}\right)
\,\mathrm{d}\mathbf{x}\,. \label{T1.1}%
\end{equation}
The derivatives of $\mathcal{P}f\left(  1\right)  $ do not vanish, since
actually we have the following formula \cite{YE2}.\smallskip

\begin{lemma}
\label{Lemma T1.1}In $\mathcal{D}_{\ast}^{\prime}\left(  \mathbb{R}%
^{n}\right)  ,$
\begin{equation}
\frac{\partial^{\ast}}{\partial x_{i}}\left(  \mathcal{P}f\left(  1\right)
\right)  =Cn_{i}\delta_{\ast}^{\left[  -n+1\right]  }, \label{T1.1p}%
\end{equation}
where $C$ is given by (\ref{Area}).
\end{lemma}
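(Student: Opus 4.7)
The plan is to unwind the definitions on both sides and show they produce the same integral over $\mathbb{S}$. By Definition \ref{Definition: Derivaative},
\begin{equation*}
\left\langle \frac{\partial^{\ast}}{\partial x_{i}}\mathcal{P}f(1),\phi\right\rangle =-\left\langle \mathcal{P}f(1),\frac{\partial\phi}{\partial x_{i}}\right\rangle =-\mathrm{F.p.}\lim_{\varepsilon\rightarrow0^{+}}\int_{\left\vert \mathbf{x}\right\vert \geq\varepsilon}\frac{\partial\phi}{\partial x_{i}}\,\mathrm{d}\mathbf{x}\,,
\end{equation*}
so the task reduces to computing the right-hand side and matching it with (\ref{Dis.16}) applied to $g=n_{i}$ and $q=-n+1$, which yields $\int_{\mathbb{S}}n_{i}(\mathbf{w})\,a_{-n+1}(\mathbf{w})\,\mathrm{d}\sigma(\mathbf{w})$ after the $C$'s cancel.

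First I would apply the divergence theorem on the region $\{|\mathbf{x}|\geq\varepsilon\}$: since $\phi$ has compact support, only the inner boundary contributes, and the outward (with respect to this region) normal on $|\mathbf{x}|=\varepsilon$ is $-\mathbf{n}$. Thus
\begin{equation*}
\int_{\left\vert \mathbf{x}\right\vert \geq\varepsilon}\frac{\partial\phi}{\partial x_{i}}\,\mathrm{d}\mathbf{x}=-\int_{\left\vert \mathbf{x}\right\vert =\varepsilon}\phi(\mathbf{x})\,n_{i}\,\mathrm{d}S\,.
\end{equation*}
Parametrizing by $\mathbf{x}=\varepsilon\mathbf{w}$ with $\mathbf{w}\in\mathbb{S}$ gives a factor $\varepsilon^{n-1}$, so this equals $-\varepsilon^{n-1}\int_{\mathbb{S}}\phi(\varepsilon\mathbf{w})\,n_{i}(\mathbf{w})\,\mathrm{d}\sigma(\mathbf{w})$.

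Next I would expand $\phi(\varepsilon\mathbf{w})\sim\sum_{j=m}^{\infty}a_{j}(\mathbf{w})\,\varepsilon^{j}$. Because the asymptotic development is \emph{strong} and uniform in $\mathbf{w}\in\mathbb{S}$, the integration commutes with the expansion, giving
\begin{equation*}
-\int_{\left\vert \mathbf{x}\right\vert \geq\varepsilon}\frac{\partial\phi}{\partial x_{i}}\,\mathrm{d}\mathbf{x}\;\sim\;\sum_{j\geq m}\varepsilon^{j+n-1}\int_{\mathbb{S}}a_{j}(\mathbf{w})\,n_{i}(\mathbf{w})\,\mathrm{d}\sigma(\mathbf{w})\,,\qquad\varepsilon\rightarrow0^{+}\,.
\end{equation*}
Taking the finite part selects precisely the $\varepsilon^{0}$ term, which corresponds to $j=-n+1$, so we get $\int_{\mathbb{S}}a_{-n+1}(\mathbf{w})\,n_{i}(\mathbf{w})\,\mathrm{d}\sigma(\mathbf{w})$. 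This matches $\langle Cn_{i}\delta_{\ast}^{[-n+1]},\phi\rangle$ exactly, completing the proof.

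The main obstacle is justifying the finite-part/expansion interchange and making sure the sign from the inner-boundary normal comes out correctly; all other powers of $\varepsilon$ are discarded (positive ones vanish in the limit, negative ones are discarded by $\mathrm{F.p.}$), and no logarithmic terms appear because the strong expansion contains only integer powers of $r$.
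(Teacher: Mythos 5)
Your proposal is correct and follows essentially the same route as the paper's own proof: duality, the divergence theorem on $\left\{ \left\vert \mathbf{x}\right\vert \geq\varepsilon\right\}$ reducing to the integral of $n_{i}\phi$ over the sphere of radius $\varepsilon$, expansion of $\phi$ and extraction of the coefficient of $\varepsilon^{0}$, which is the $j=1-n$ term, matched against (\ref{Dis.16}). The only difference is that you spell out the sign of the inner normal and the justification of the term-by-term integration, which the paper leaves implicit.
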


\begin{proof}
One can find a proof of a more general statement in \cite{YE2}, but in this
simpler case the proof can be written as follows,%
\begin{align*}
\left\langle \frac{\partial^{\ast}}{\partial x_{i}}\left(  \mathcal{P}f\left(
1\right)  \right)  ,\phi\right\rangle  &  =-\left\langle \mathcal{P}f\left(
1\right)  ,\frac{\partial\phi}{\partial x_{i}}\right\rangle \\
&  =-\mathrm{F.p.}\lim_{\varepsilon\rightarrow0^{+}}\int_{\left\vert
\mathbf{x}\right\vert \geq\varepsilon}\frac{\partial\phi}{\partial x_{i}%
}\,\mathrm{d}\mathbf{x}\\
&  =\mathrm{F.p.}\lim_{\varepsilon\rightarrow0^{+}}\int_{\varepsilon
\mathbb{S}^{n-1}}n_{i}\phi\,\mathrm{d}\sigma\,,
\end{align*}
so that if $\phi\in\mathcal{D}_{\ast}\left(  \mathbb{R}^{n}\right)  $ has the
expansion $\phi\left(  \mathbf{x}\right)  \sim\sum_{j=m}^{\infty}a_{j}\left(
\mathbf{w}\right)  r^{j},$ as $\mathbf{x}\rightarrow\mathbf{0},$ then%
\[
\int_{\varepsilon\mathbb{S}^{n-1}}n_{i}\phi\,\mathrm{d}\sigma\sim\sum
_{j=m}^{\infty}\left(  \int_{\mathbb{S}}n_{i}a_{j}\left(  \mathbf{w}\right)
\,\mathrm{d}\sigma\left(  \mathbf{w}\right)  \right)  \varepsilon^{n-1+j},
\]
as $\varepsilon\rightarrow0^{+}.$ The finite part of the limit is equal to the
coefficient of $\varepsilon^{0},$ thus%
\begin{align*}
\mathrm{F.p.}\underset{\varepsilon\rightarrow0}{\lim}\int_{\varepsilon
\mathbb{S}^{n-1}}n_{i}\phi\,\mathrm{d}\sigma &  =\int_{\mathbb{S}}n_{i}%
a_{1-n}\left(  \mathbf{w}\right)  \,\mathrm{d}\sigma\left(  \mathbf{w}\right)
\\
&  =\left\langle Cn_{i}\delta_{\ast}^{\left[  1-n\right]  },\phi\right\rangle
\,,
\end{align*}
as required.\smallskip
\end{proof}

If $\psi\in\mathcal{E}_{\ast}\left(  \mathbb{R}^{n}\right)  $ is a multiplier
of $\mathcal{D}_{\ast}\left(  \mathbb{R}^{n}\right)  ,$ then we define, in a
similar way, the thick distribution $\mathcal{P}f\left(  \psi\right)
\in\mathcal{D}_{\ast}^{\prime}\left(  \mathbb{R}^{n}\right)  ,$ and we clearly
have the useful formula%
\begin{equation}
\mathcal{P}f\left(  \psi\right)  =\psi\mathcal{P}f\left(  1\right)  \,,
\label{T1.2}%
\end{equation}
which immediately gives the thick distributional derivative of $\mathcal{P}%
f\left(  \psi\right)  $ as%
\[
\frac{\partial^{\ast}}{\partial x_{i}}\left(  \mathcal{P}f\left(  \psi\right)
\right)  =\frac{\partial\psi}{\partial x_{i}}\mathcal{P}f\left(  1\right)
+\psi\frac{\partial^{\ast}}{\partial x_{i}}\left(  \mathcal{P}f\left(
1\right)  \right)  \,,
\]
so that we obtain the ensuing formula.\smallskip

\begin{proposition}
\label{Prop. T1.1}If $\psi\in\mathcal{E}_{\ast}\left(  \mathbb{R}^{n}\right)
$ then%
\begin{equation}
\frac{\partial^{\ast}}{\partial x_{i}}\left(  \mathcal{P}f\left(  \psi\right)
\right)  =\mathcal{P}f\left(  \frac{\partial\psi}{\partial x_{i}}\right)
+Cn_{i}\psi\delta_{\ast}^{\left[  1-n\right]  }. \label{T1.3}%
\end{equation}

\end{proposition}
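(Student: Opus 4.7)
The plan is to chain together three facts already in hand: the multiplicative identity (\ref{T1.2}), the product rule (\ref{Der.3}) for thick distributions, and Lemma \ref{Lemma T1.1}. Indeed, the text immediately preceding the proposition essentially executes the argument, so the task is mainly to package it and verify that each ingredient legitimately applies.

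First, I would use (\ref{T1.2}) to rewrite $\mathcal{P}f(\psi) = \psi\,\mathcal{P}f(1)$, thereby exhibiting $\mathcal{P}f(\psi)$ as the product of a multiplier $\psi \in \mathcal{E}_{\ast}(\mathbb{R}^{n})$ with the thick distribution $\mathcal{P}f(1) \in \mathcal{D}_{\ast}^{\prime}(\mathbb{R}^{n})$. Next, I would apply the product rule (\ref{Der.3}) to obtain
\[
\frac{\partial^{\ast}}{\partial x_{i}}\bigl(\psi\,\mathcal{P}f(1)\bigr)
=\frac{\partial\psi}{\partial x_{i}}\,\mathcal{P}f(1)+\psi\,\frac{\partial^{\ast}}{\partial x_{i}}\mathcal{P}f(1).
\]
Here it is important that $\partial\psi/\partial x_{i}$ again belongs to $\mathcal{E}_{\ast}(\mathbb{R}^{n})$; this is immediate from the fact that the asymptotic expansion of $\psi$ at $\mathbf{a}$ is \emph{strong}, so it may be differentiated term-by-term and the result is of the same type.

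Finally, I would substitute $\partial^{\ast}\mathcal{P}f(1)/\partial x_{i}=Cn_{i}\delta_{\ast}^{[1-n]}$ from Lemma \ref{Lemma T1.1} into the second summand, and rewrite the first summand by a second application of (\ref{T1.2}), now in the form $(\partial\psi/\partial x_{i})\,\mathcal{P}f(1)=\mathcal{P}f(\partial\psi/\partial x_{i})$, legitimate because $\partial\psi/\partial x_{i}\in\mathcal{E}_{\ast}(\mathbb{R}^{n})$ as just observed. Combining the two pieces yields (\ref{T1.3}).

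I do not anticipate a real obstacle: the only point requiring care is the stability of $\mathcal{E}_{\ast}(\mathbb{R}^{n})$ under ordinary partial differentiation, which is built into the strong-asymptotic condition recalled just after Definition \ref{Definition Test}. The product $\psi\delta_{\ast}^{[1-n]}$ appearing on the right is unambiguous because $\psi$ is a multiplier of $\mathcal{D}_{\ast}^{\prime}(\mathbb{R}^{n})$, and no further unpacking of its action on thick test functions is needed for the identity at the level of thick distributions.
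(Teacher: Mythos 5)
Your proposal is correct and follows exactly the paper's own route: the paper derives (\ref{T1.3}) in the text just before the proposition by writing $\mathcal{P}f(\psi)=\psi\,\mathcal{P}f(1)$ via (\ref{T1.2}), applying the product rule (\ref{Der.3}), and substituting Lemma \ref{Lemma T1.1}. Your added remark that $\partial\psi/\partial x_{i}\in\mathcal{E}_{\ast}(\mathbb{R}^{n})$ (by strength of the expansion) is a sensible verification of a point the paper leaves implicit, but it does not change the argument.
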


Notice that, in general, the term $Cn_{i}\psi\delta_{\ast}^{\left[
1-n\right]  }$ is \emph{not} a thick delta of order $1-n.$ Indeed, let us now
consider the case when $\psi\in\mathcal{E}_{\ast}\left(  \mathbb{R}%
^{n}\right)  $ is homogeneous of order $k\in\mathbb{Z}.$ Then $\psi\left(
\mathbf{x}\right)  =r^{k}\psi_{0}\left(  \mathbf{x}\right)  ,$ where $\psi
_{0}$ is homogeneous of order $0.$ Since $r^{k}\delta_{\ast}^{\left[
q\right]  }=\delta_{\ast}^{\left[  q-k\right]  }$\cite[Eqn. (5.16)]{YE2} we
obtain the following particular case of (\ref{T1.3}), where now the term
$Cn_{i}\psi_{0}\delta_{\ast}^{\left[  1-n-k\right]  }$ \emph{is} a thick delta
of order $1-n-k.$\smallskip

\begin{proposition}
\label{Prop.T1p}If $\psi\in\mathcal{E}_{\ast}\left(  \mathbb{R}^{n}\right)  $
is homogeneous of order $k\in\mathbb{Z},$ then%
\begin{equation}
\frac{\partial^{\ast}}{\partial x_{i}}\left(  \mathcal{P}f\left(  \psi\right)
\right)  =\mathcal{P}f\left(  \frac{\partial\psi}{\partial x_{i}}\right)
+Cn_{i}\psi_{0}\delta_{\ast}^{\left[  1-n-k\right]  }\,, \label{T1.4}%
\end{equation}
where $\psi_{0}\left(  \mathbf{x}\right)  =\left\vert \mathbf{x}\right\vert
^{-k}\psi\left(  \mathbf{x}\right)  .$\smallskip
\end{proposition}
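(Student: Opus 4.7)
My plan is to derive Proposition \ref{Prop.T1p} as a direct specialization of the preceding Proposition \ref{Prop. T1.1}, with the algebraic simplification driven by the homogeneity hypothesis. The starting point is the formula
\[
\frac{\partial^{\ast}}{\partial x_{i}}\left(\mathcal{P}f\left(\psi\right)\right)=\mathcal{P}f\left(\frac{\partial\psi}{\partial x_{i}}\right)+Cn_{i}\psi\delta_{\ast}^{\left[1-n\right]},
\]
which holds for any multiplier $\psi\in\mathcal{E}_{\ast}(\mathbb{R}^{n})$. Thus the only thing to verify is the identity
\[
Cn_{i}\,\psi\,\delta_{\ast}^{\left[1-n\right]}=Cn_{i}\,\psi_{0}\,\delta_{\ast}^{\left[1-n-k\right]},
\]
which will finish the proof.

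Next, I would use the homogeneity assumption to write $\psi(\mathbf{x})=r^{k}\psi_{0}(\mathbf{x})$, where by construction $\psi_{0}(\mathbf{x})=|\mathbf{x}|^{-k}\psi(\mathbf{x})$ is homogeneous of degree $0$, and $r^{k}$ is a multiplier of $\mathcal{D}_{\ast}^{\prime}(\mathbb{R}^{n})$ since $k\in\mathbb{Z}$ (either a smooth function near $\mathbf{0}$ when $k\geq 0$, or belonging to $\mathcal{E}_{\ast}(\mathbb{R}^{n})$ when $k<0$). Then the product $n_{i}\psi$ factors as $r^{k}(n_{i}\psi_{0})$, where the factor $n_{i}\psi_{0}$ is homogeneous of degree $0$, and I can reassociate the scalar multiplication in $\mathcal{D}_{\ast}^{\prime}(\mathbb{R}^{n})$ to get
\[
n_{i}\psi\,\delta_{\ast}^{\left[1-n\right]}=(n_{i}\psi_{0})\left(r^{k}\delta_{\ast}^{\left[1-n\right]}\right).
\]

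Finally, I would invoke the cited formula $r^{k}\delta_{\ast}^{[q]}=\delta_{\ast}^{[q-k]}$ (Eqn.~(5.16) of \cite{YE2}) with $q=1-n$, yielding $r^{k}\delta_{\ast}^{[1-n]}=\delta_{\ast}^{[1-n-k]}$, and substitute back to obtain the desired expression $Cn_{i}\psi_{0}\delta_{\ast}^{[1-n-k]}$. Plugging this into (\ref{T1.3}) gives (\ref{T1.4}).

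There is no real obstacle here; the proof is essentially a rewriting exercise using the product/scaling rules for thick deltas. The only subtlety worth a brief remark is conceptual rather than computational: once $\psi$ is homogeneous, the power $r^{k}$ can be absorbed into the order index of the thick delta, so the right-hand side genuinely becomes a thick delta of order $1-n-k$ rather than merely a product of a function with a thick delta of order $1-n$; this is the point the paragraph following (\ref{T1.3}) is setting up, and the proof should highlight that the degree-$0$ function $n_{i}\psi_{0}$ is exactly the kind of coefficient permitted in the definition of a thick delta via (\ref{Dis.15}).
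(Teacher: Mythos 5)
Your proposal is correct and follows essentially the same route as the paper, which obtains (\ref{T1.4}) precisely by specializing (\ref{T1.3}) to homogeneous $\psi$, writing $\psi=r^{k}\psi_{0}$, and absorbing $r^{k}$ into the order of the thick delta via $r^{k}\delta_{\ast}^{[q]}=\delta_{\ast}^{[q-k]}$. (Only the parenthetical claim that $r^{k}$ is smooth near $\mathbf{0}$ for $k\geq 0$ is slightly off for odd $k$, but this is immaterial since $r^{k}\in\mathcal{E}_{\ast}\left(\mathbb{R}^{n}\right)$ for every $k\in\mathbb{Z}$.)
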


If we now apply the projection $\Pi$ onto the usual distribution space
$\mathcal{D}^{\prime}\left(  \mathbb{R}^{n}\right)  ,$ we obtain the formula
for the distributional derivatives of homogeneous distributions. Observe first
that if $k>-n$ then $\psi$ is integrable at the origin, and thus $\psi$ is a
regular distribution and $\Pi\left(  \mathcal{P}f\left(  \psi\right)  \right)
=\psi.$ If $k\leq-n$ then $\Pi\left(  \mathcal{P}f\left(  \psi\right)
\right)  =\mathcal{P}f\left(  \psi\right)  ,$ since in that case the integral
$\int_{\mathbb{R}^{n}}\psi\left(  \mathbf{x}\right)  \phi\left(
\mathbf{x}\right)  \,\mathrm{d}\mathbf{x}$ would be divergent, in general, if
$\phi\in\mathcal{D}\left(  \mathbb{R}^{n}\right)  .$ A particularly
interesting case is when $k=-n,$ since if $\psi$ is homogeneous of degree $-n$
and%
\begin{equation}
\int_{\mathbb{S}}\psi\left(  \mathbf{w}\right)  \,\mathrm{d}\sigma\left(
\mathbf{w}\right)  =0\,, \label{T1.4p}%
\end{equation}
then the \emph{principal value} of the integral%
\begin{equation}
\mathrm{p.v.}\int_{\mathbb{R}^{n}}\psi\left(  \mathbf{x}\right)  \phi\left(
\mathbf{x}\right)  \,\mathrm{d}\mathbf{x=}\lim_{\varepsilon\rightarrow0^{+}%
}\int_{\left\vert \mathbf{x}\right\vert \geq\varepsilon}\psi\left(
\mathbf{x}\right)  \phi\left(  \mathbf{x}\right)  \,\mathrm{d}\mathbf{x}\,,
\label{T1.4p2}%
\end{equation}
actually exists for each $\phi\in\mathcal{D}\left(  \mathbb{R}^{n}\right)  ,$
so that $\mathcal{P}f\left(  \psi\right)  =\mathrm{p.v.}\left(  \psi\right)
,$ the principal value distribution\footnote{Let $\Sigma$ be a closed surface
in $\mathbb{R}^{n}$ that encloses the origin. We describe $\Sigma$ by an
equation of the form $g(\mathbf{x})=1,$ where $g(\mathbf{x})$ is continuous in
$\mathbb{R}^{n}\setminus\left\{  \mathbf{0}\right\}  $ and homogeneous of
degree $1.$ Then $\left\langle \mathcal{R}_{\Sigma}\left(  \psi(\mathbf{x}%
)\right)  ,\phi(\mathbf{x})\right\rangle =\underset{\varepsilon\rightarrow
0}{\lim}\int_{g(\mathbf{x}\dot{)}\geqslant\varepsilon}\psi(\mathbf{x}%
)\phi(\mathbf{x})\,\mathrm{d}\mathbf{x}\,,$ defines another regularization of
$\psi,$ but in general $\mathcal{R}_{\Sigma}\left(  \psi(\mathbf{x})\right)
\neq\mathrm{p.v.}\left(  \psi\left(  \mathbf{x}\right)  \right)  $ \cite{YE},
a fact observed by Farassat \cite{Farassat}, who indicated its importance in
numerical computations, and studied by several authors \cite{VH1, YE}.}.
Condition (\ref{T1.4p}) holds whenever $\psi=\partial\xi/\partial x_{j}$ for
some $\xi$ homogeneous of order $-n+1.$\smallskip

\begin{proposition}
\label{Prop. T1.2}Let $\psi$ be homogeneous of order $k\in\mathbb{Z}$ in
$\mathbb{R}^{n}\setminus\left\{  \mathbf{0}\right\}  .$ Then, in
$\mathcal{D}^{\prime}\left(  \mathbb{R}^{n}\right)  $ the distributional
derivative $\overline{\partial}\psi/\partial x_{i}$ is given as follows:%
\begin{equation}
\frac{\overline{\partial}\psi}{\partial x_{i}}=\frac{\partial\psi}{\partial
x_{i}}\,,\ \ \ \ \ k>1-n\,, \label{T1.5}%
\end{equation}
equality of regular distributions;%
\begin{equation}
\frac{\overline{\partial}\psi}{\partial x_{i}}=\mathrm{p.v.}\left(
\frac{\partial\psi}{\partial x_{i}}\right)  +A\delta\left(  \mathbf{x}\right)
\,,\ \ \ \ k=1-n\,, \label{T1.6}%
\end{equation}
where $A=\int_{\mathbb{S}}n_{i}\psi_{0}\left(  \mathbf{w}\right)
\,\mathrm{d}\sigma\left(  \mathbf{w}\right)  =\left\langle \psi_{0}%
,n_{i}\right\rangle _{\mathcal{D}^{\prime}\left(  \mathbb{S}\right)
\times\mathcal{D}\left(  \mathbb{S}\right)  },$ while%
\begin{equation}
\frac{\overline{\partial}\psi}{\partial x_{i}}=\mathcal{P}f\left(
\frac{\partial\psi}{\partial x_{i}}\right)  +D\left(  \mathbf{x}\right)
\,,\ \ \ \ k<1-n\,, \label{T1.7}%
\end{equation}
where $D\left(  \mathbf{x}\right)  $ is a homogeneous distribution of order
$k-1$ concentrated at the origin and given by%
\begin{equation}
D\left(  \mathbf{x}\right)  =\left(  -1\right)  ^{-k-n+1}\sum_{j_{1}%
+\cdots+j_{n}=-k-n+1}\frac{\left\langle n_{i}\psi_{0},\mathbf{w}^{\left(
j_{1},\ldots,j_{n}\right)  }\right\rangle }{j_{1}!\cdots j_{n}!}%
\mathbf{D}^{\left(  j_{1},\ldots,j_{n}\right)  }\delta\left(  \mathbf{x}%
\right)  \,. \label{T1.8}%
\end{equation}

\end{proposition}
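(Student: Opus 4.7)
The plan is to apply the projection $\Pi:\mathcal{D}_{\ast}^{\prime}(\mathbb{R}^{n})\to\mathcal{D}^{\prime}(\mathbb{R}^{n})$ to the thick identity
\[
\frac{\partial^{\ast}}{\partial x_{i}}\mathcal{P}f(\psi)=\mathcal{P}f\!\left(\frac{\partial\psi}{\partial x_{i}}\right)+Cn_{i}\psi_{0}\delta_{\ast}^{[1-n-k]}
\]
of Proposition \ref{Prop.T1p} and read off the three assertions from the sign of the exponent $q:=1-n-k$. Two preliminary facts are used. First, by duality $\Pi$ intertwines thick and ordinary derivatives,
\[
\Pi\!\left(\frac{\partial^{\ast}T}{\partial x_{i}}\right)=\frac{\overline{\partial}\,\Pi T}{\partial x_{i}},
\]
since for $\phi\in\mathcal{D}(\mathbb{R}^{n})$ the inclusion $i\phi$ satisfies $\partial(i\phi)/\partial x_{i}=i(\partial\phi/\partial x_{i})$. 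Second, $\Pi\mathcal{P}f(\chi)=\chi$ whenever $\chi$ is locally integrable (degree $>-n$); otherwise the finite part $\mathcal{P}f(\chi)$ is itself an element of $\mathcal{D}^{\prime}(\mathbb{R}^{n})$ on which $\Pi$ acts as the identity. The whole argument then reduces to computing $\Pi(Cn_{i}\psi_{0}\delta_{\ast}^{[q]})$.

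Expanding a standard test function in Taylor series at the origin gives $\phi(r\mathbf{w})=\sum_{j\ge 0}r^{j}a_{j}(\mathbf{w})$ with $a_{j}(\mathbf{w})=\sum_{|\mathbf{p}|=j}\mathbf{w}^{\mathbf{p}}\partial^{\mathbf{p}}\phi(\mathbf{0})/\mathbf{p}!$, so \eqref{Dis.15} yields
\[
\langle Cn_{i}\psi_{0}\delta_{\ast}^{[q]},i\phi\rangle=\sum_{|\mathbf{p}|=q}\frac{\langle n_{i}\psi_{0},\mathbf{w}^{\mathbf{p}}\rangle}{\mathbf{p}!}\,\partial^{\mathbf{p}}\phi(\mathbf{0})\,.
\]
If $q<0$, equivalently $k>1-n$, the sum is empty, so $\Pi(Cn_{i}\psi_{0}\delta_{\ast}^{[q]})=0$ and \eqref{T1.5} follows. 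If $q=0$, equivalently $k=1-n$, only $\mathbf{p}=\mathbf{0}$ contributes, giving $\Pi(Cn_{i}\psi_{0}\delta_{\ast})=A\delta(\mathbf{x})$ with $A=\langle\psi_{0},n_{i}\rangle_{\mathcal{D}^{\prime}(\mathbb{S})\times\mathcal{D}(\mathbb{S})}$; combined with the fact that $\partial\psi/\partial x_{i}$, as the derivative of an order-$(1-n)$ homogeneous function, satisfies the vanishing-mean condition \eqref{T1.4p} so that $\mathcal{P}f(\partial\psi/\partial x_{i})=\mathrm{p.v.}(\partial\psi/\partial x_{i})$, this produces \eqref{T1.6}. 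If $q>0$, equivalently $k<1-n$, substituting $\partial^{\mathbf{p}}\phi(\mathbf{0})=(-1)^{|\mathbf{p}|}\langle \mathbf{D}^{\mathbf{p}}\delta,\phi\rangle$ turns the right-hand side into the finite sum \eqref{T1.8}, proving \eqref{T1.7}.

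The only substantive step is verifying $\int_{\mathbb{S}}(\partial\psi/\partial x_{i})|_{\mathbb{S}}\,d\sigma=0$ when $k=1-n$, and this is a one-line divergence-theorem computation on a spherical shell $\varepsilon_{1}\le r\le\varepsilon_{2}$ using the homogeneity of $\psi$. Everything else is bookkeeping of signs, factorials, and the factor $C$ as one passes from the Taylor coefficient $a_{q}(\mathbf{w})$ to the explicit distribution $D(\mathbf{x})$ on the right of \eqref{T1.8}.
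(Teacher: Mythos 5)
Your proposal is correct and follows essentially the same route as the paper: project the thick identity of Proposition \ref{Prop.T1p} onto $\mathcal{D}^{\prime}\left(\mathbb{R}^{n}\right)$, using the commutation of $\Pi$ with derivatives and the projection formula for thick deltas, with the three cases corresponding to the sign of $q=1-n-k$. The only difference is that you derive the projection formula (\ref{T1.9}) (and the mean-zero condition needed when $k=1-n$) inline, whereas the paper simply cites \cite[Prop. 4.7]{YE2} and the discussion preceding the proposition.
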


\begin{proof}
It follows from (\ref{T1.3}) if we observe \cite[Prop. 4.7]{YE2} that if
$g\in\mathcal{D}^{\prime}\left(  \mathbb{S}\right)  $ then%
\begin{equation}
\Pi\left(  g\delta_{\ast}^{\left[  q\right]  }\right)  =\frac{\left(
-1\right)  ^{q}}{C}\sum_{j_{1}+\cdots+j_{n}=q}\frac{\left\langle g\left(
\mathbf{w}\right)  ,\mathbf{w}^{\left(  j_{1},\ldots,j_{n}\right)
}\right\rangle }{j_{1}!\cdots j_{n}!}\mathbf{D}^{\left(  j_{1},\ldots
,j_{n}\right)  }\delta\left(  \mathbf{x}\right)  \,, \label{T1.9}%
\end{equation}
and, in particular,
\begin{equation}
\Pi\left(  g\delta_{\ast}\right)  =\frac{1}{C}\left\langle g\left(
\mathbf{w}\right)  ,1\right\rangle \delta\left(  \mathbf{x}\right)  \,,
\label{T1.10}%
\end{equation}
if $q=0.$\smallskip
\end{proof}

Our next task is to compute the second order thick derivatives of homogeneous
distributions. Indeed, if $\psi$ is homogeneous of degree $k$ then we can
iterate the formula (\ref{T1.4}) to obtain%
\begin{align}
&  \frac{\partial^{\ast2}}{\partial x_{i}\partial x_{j}}\left(  \mathcal{P}%
f\left(  \psi\right)  \right)  =\frac{\partial^{\ast}}{\partial x_{i}}\left(
\mathcal{P}f\left(  \frac{\partial\psi}{\partial x_{j}}\right)  +Cn_{j}%
\psi_{0}\delta_{\ast}^{\left[  1-n-k\right]  }\right) \label{T1.11}\\
&  \ \ \ \ \ \ \ \ \ =\mathcal{P}f\left(  \frac{\partial^{2}\psi}{\partial
x_{i}\partial x_{j}}\right)  +Cn_{i}\xi_{0}\delta_{\ast}^{\left[
2-n-k\right]  }+\frac{\partial^{\ast}}{\partial x_{i}}\left(  Cn_{j}\psi
_{0}\delta_{\ast}^{\left[  1-n-k\right]  }\right)  \,,\nonumber
\end{align}
where $\xi=\partial\psi/\partial x_{j}$ is homogeneous of degree $k-1$ and
$\xi_{0}\left(  \mathbf{x}\right)  =\left\vert \mathbf{x}\right\vert ^{1-k}%
\xi\left(  \mathbf{x}\right)  $ is the associated homogeneous of degree $0$
function. Use of (\ref{DD.1}) and (\ref{Su.5}) allows us to write%
\begin{align}
&  \frac{\partial^{\ast}}{\partial x_{i}}\left(  Cn_{j}\psi_{0}\delta_{\ast
}^{\left[  1-n-k\right]  }\right)  =C\left(  \frac{\delta}{\delta x_{i}%
}\left(  n_{j}\psi_{0}\right)  +\left(  k-1\right)  n_{i}n_{j}\psi_{0}\right)
\delta_{\ast}^{\left[  2-n-k\right]  }\label{T1.12}\\
&  \ \ \ \ \ =C\left(  \left(  \delta_{ij}-n_{i}n_{j}\right)  \psi_{0}%
+n_{j}\frac{\delta\psi_{0}}{\delta x_{i}}+\left(  k-1\right)  n_{i}n_{j}%
\psi_{0}\right)  \delta_{\ast}^{\left[  2-n-k\right]  }\nonumber\\
&  \ \ \ \ \ =C\left(  \left(  \delta_{ij}+\left(  k-2\right)  n_{i}%
n_{j}\right)  \psi_{0}+n_{j}\frac{\delta\psi_{0}}{\delta x_{i}}\right)
\delta_{\ast}^{\left[  2-n-k\right]  }\,,\nonumber
\end{align}
while the equation $\psi=r^{k}\psi_{0}$ yields $\partial\psi/\partial
x_{j}=r^{k-1}\{kn_{j}\psi_{0}+\delta\psi_{0}/\delta x_{j}\},$ so that%
\begin{equation}
\xi_{0}=kn_{j}\psi_{0}+\frac{\delta\psi_{0}}{\delta x_{j}}\,. \label{T1.13}%
\end{equation}
Collecting terms we thus obtain the following formula.\smallskip

\begin{proposition}
\label{Prop. T1. 4}If $\psi\in\mathcal{E}_{\ast}\left(  \mathbb{R}^{n}\right)
$ is homogeneous of order $k\in\mathbb{Z},$ then%
\begin{align}
&  \frac{\partial^{\ast2}}{\partial x_{i}\partial x_{j}}\left(  \mathcal{P}%
f\left(  \psi\right)  \right)  =\mathcal{P}f\left(  \frac{\partial^{2}\psi
}{\partial x_{i}\partial x_{j}}\right) \label{T1.14}\\
&  \ \ \ \ \ +C\left(  \left(  \delta_{ij}+2\left(  k-1\right)  n_{i}%
n_{j}\right)  \psi_{0}+n_{j}\frac{\delta\psi_{0}}{\delta x_{i}}+n_{i}%
\frac{\delta\psi_{0}}{\delta x_{j}}\right)  \delta_{\ast}^{\left[
2-n-k\right]  }\,.\nonumber
\end{align}
where $\psi_{0}\left(  \mathbf{x}\right)  =\left\vert \mathbf{x}\right\vert
^{-k}\psi\left(  \mathbf{x}\right)  .$\smallskip
\end{proposition}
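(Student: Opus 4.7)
The plan is to iterate the first-order formula (\ref{T1.4}) of Proposition \ref{Prop.T1p} once more, and then simplify the thick-delta terms that arise by using (\ref{DD.1}), the Leibniz rule for the $\delta$-derivative, and the identity $\mu_{ij} = \delta n_i/\delta x_j = \delta_{ij} - n_i n_j$ recorded in Section \ref{Section: Thick distributions}. Starting from $\partial^{\ast}\mathcal{P}f(\psi)/\partial x_{j}=\mathcal{P}f(\partial\psi/\partial x_{j})+Cn_{j}\psi_{0}\delta_{\ast}^{[1-n-k]}$, linearity of $\partial^{\ast}/\partial x_{i}$ splits the computation of the mixed second thick derivative into two cleanly separated pieces.

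For the first piece I would reapply (\ref{T1.4}) to $\mathcal{P}f(\xi)$ with $\xi=\partial\psi/\partial x_{j}$, which is homogeneous of degree $k-1$; this produces $\mathcal{P}f(\partial^{2}\psi/\partial x_{i}\partial x_{j})$ together with a thick delta of order $2-n-k$ carrying the coefficient $Cn_{i}\xi_{0}$. A short calculation starting from $\psi=r^{k}\psi_{0}$, using $\partial r/\partial x_{j}=n_{j}$ and the fact that for a degree-$0$ homogeneous function $\partial/\partial x_{j}$ agrees with $r^{-1}\delta/\delta x_{j}$, identifies the associated degree-$0$ factor as $\xi_{0}=kn_{j}\psi_{0}+\delta\psi_{0}/\delta x_{j}$.

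For the second piece I would apply (\ref{DD.1}) directly with $g=Cn_{j}\psi_{0}$ and $q=1-n-k$, so that the factor $-(q+n)=k-1$ multiplies $n_{i}n_{j}\psi_{0}$. Expanding $\delta(n_{j}\psi_{0})/\delta x_{i}$ by the Leibniz rule and substituting $\delta n_{j}/\delta x_{i}=\delta_{ij}-n_{i}n_{j}$ collapses this contribution to $C((\delta_{ij}+(k-2)n_{i}n_{j})\psi_{0}+n_{j}\delta\psi_{0}/\delta x_{i})\delta_{\ast}^{[2-n-k]}$.

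Summing the two pieces yields (\ref{T1.14}). The only step where care is required, and which I would flag as the main obstacle, is the bookkeeping of the $n_{i}n_{j}\psi_{0}$ terms: one copy of $k$ enters through $\xi_{0}$, one copy of $-1$ enters through $\mu_{ij}$, and one copy of $k-1$ enters through (\ref{DD.1}), and these three coefficients must combine to the single factor $2(k-1)$ in the final formula. I would sanity-check the computation by verifying that the right-hand side of (\ref{T1.14}) is symmetric in $i$ and $j$, as it must be because the mixed thick derivative commutes; this symmetry is manifest and confirms that the $\delta_{ij}\psi_{0}$, $n_{i}n_{j}\psi_{0}$, and tangential derivative terms have been assembled correctly.
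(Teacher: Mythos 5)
Your proposal is correct and follows essentially the same route as the paper: iterate (\ref{T1.4}), apply (\ref{DD.1}) with $q=1-n-k$ together with the Leibniz rule and $\delta n_j/\delta x_i=\delta_{ij}-n_in_j$, compute $\xi_0=kn_j\psi_0+\delta\psi_0/\delta x_j$ from $\psi=r^k\psi_0$, and collect the $n_in_j\psi_0$ coefficients $k+(-1)+(k-1)=2(k-1)$. The symmetry check in $i,j$ is a sensible extra confirmation but does not change the argument.
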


Projection onto $\mathcal{D}^{\prime}\left(  \mathbb{R}^{n}\right)  $ of
(\ref{T1.14}) gives the formula for the distributional derivatives
$\overline{\partial}^{2}/\partial x_{i}\partial x_{j}(\mathcal{P}f\left(
\psi\right)  )$ if $\psi\in\mathcal{E}_{\ast}\left(  \mathbb{R}^{n}\right)  $
is homogeneous of order $k\in\mathbb{Z}.$ In case $k=2-n$ we obtain the
following formula.\smallskip

\begin{proposition}
\label{Prop. T1.5}If $\psi\in\mathcal{E}_{\ast}\left(  \mathbb{R}^{n}\right)
$ is homogeneous of order $2-n,$ then%
\begin{equation}
\frac{\overline{\partial}^{2}}{\partial x_{i}\partial x_{j}}\left(
\psi\right)  =\mathrm{p.v.}\left(  \frac{\partial^{2}\psi}{\partial
x_{i}\partial x_{j}}\right)  +B\delta\left(  \mathbf{x}\right)  \,,
\label{T1.15}%
\end{equation}
where%
\begin{equation}
B=\left\langle \psi_{0},2n_{i}n_{j}-\delta_{ij}\right\rangle _{\mathcal{D}%
^{\prime}\left(  \mathbb{S}\right)  \times\mathcal{D}\left(  \mathbb{S}%
\right)  }\,. \label{T1.16}%
\end{equation}

\end{proposition}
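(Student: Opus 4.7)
The plan is to apply the projection operator $\Pi:\mathcal{D}_{\ast}^{\prime}(\mathbb{R}^n)\to\mathcal{D}^{\prime}(\mathbb{R}^n)$ to the identity (\ref{T1.14}) specialized to $k=2-n$. Because the inclusion $i:\mathcal{D}(\mathbb{R}^n)\hookrightarrow\mathcal{D}_{\ast}(\mathbb{R}^n)$ commutes with partial differentiation on ordinary test functions, its dual $\Pi$ intertwines the thick and ordinary distributional derivatives, so projecting the left-hand side gives $\overline{\partial}^{2}\Pi(\mathcal{P}f(\psi))/\partial x_i\partial x_j$. Since $\psi$ is homogeneous of degree $2-n>-n$, it is locally integrable at the origin, so $\Pi(\mathcal{P}f(\psi))=\psi$, and the projected left-hand side is exactly $\overline{\partial}^{2}\psi/\partial x_i\partial x_j$.

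On the right-hand side of (\ref{T1.14}) with $k=2-n$, the first summand becomes $\mathcal{P}f(\partial^{2}\psi/\partial x_i\partial x_j)$, whose integrand is homogeneous of degree $-n$ and is equal to $\partial/\partial x_i(\partial\psi/\partial x_j)$ with $\partial\psi/\partial x_j$ homogeneous of degree $1-n$; by the remark following (\ref{T1.4p}), condition (\ref{T1.4p}) is satisfied, so this term already equals $\mathrm{p.v.}(\partial^{2}\psi/\partial x_i\partial x_j)$. The second summand is a thick delta of order $2-n-k=0$, namely $Cg\,\delta_{\ast}$ with
\[
g=\bigl(\delta_{ij}+2(1-n)n_i n_j\bigr)\psi_{0}+n_j\frac{\delta\psi_{0}}{\delta x_i}+n_i\frac{\delta\psi_{0}}{\delta x_j},
\]
and by (\ref{T1.10}) its projection is $B\,\delta(\mathbf{x})$ with $B=\int_{\mathbb{S}}g\,\mathrm{d}\sigma$.

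The remaining and main step is to simplify this integral; this is where the genuine work lies. I would first derive, from the divergence theorem applied to the vector field having $F_0$ in the $i$-th slot and zero elsewhere on the annulus $\varepsilon\le r\le 1$ (with $F_0$ the degree-zero homogeneous extension of $f$ and using (\ref{Su.4})), the sphere integration-by-parts formula
\[
\int_{\mathbb{S}}\frac{\delta f}{\delta x_i}\,\mathrm{d}\sigma=(n-1)\int_{\mathbb{S}}n_i f\,\mathrm{d}\sigma.
\]
Applying this with $f=n_j\psi_{0}$, combined with the Leibniz identity $n_j\,\delta\psi_{0}/\delta x_i=\delta(n_j\psi_{0})/\delta x_i-\psi_{0}\mu_{ij}$ and $\mu_{ij}=\delta_{ij}-n_i n_j$, yields
\[
\int_{\mathbb{S}}n_j\frac{\delta\psi_{0}}{\delta x_i}\,\mathrm{d}\sigma=n\int_{\mathbb{S}}n_i n_j\psi_{0}\,\mathrm{d}\sigma-\delta_{ij}\int_{\mathbb{S}}\psi_{0}\,\mathrm{d}\sigma,
\]
and symmetrically for the $n_i\,\delta\psi_{0}/\delta x_j$ term. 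Substituting these into $\int_{\mathbb{S}}g\,\mathrm{d}\sigma$, the $\delta_{ij}$ contributions combine to $-\delta_{ij}\int\psi_{0}$ and the $n_i n_j$ contributions to $2\int n_i n_j\psi_{0}$, giving exactly $B=\langle\psi_{0},2n_i n_j-\delta_{ij}\rangle$. The main obstacle is this sphere-level bookkeeping; once the integration-by-parts identity is in place, the rest is algebra and the two originally distinct-looking terms merge into the elegant inner product stated in (\ref{T1.16}).
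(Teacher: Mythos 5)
Your proposal is correct and follows essentially the same route as the paper: project (\ref{T1.14}) with $k=2-n$ via $\Pi$, identify the finite part term with the principal value, use (\ref{T1.10}) on the thick delta, and reduce $B$ with the spherical identity $\langle n_j\,\delta\psi_0/\delta x_i,1\rangle=\langle\psi_0,n\,n_in_j-\delta_{ij}\rangle$. The only difference is that you rederive this identity from the divergence theorem on an annulus together with $\mu_{ij}=\delta_{ij}-n_in_j$, whereas the paper simply cites it as \cite[(2.6)]{YE2}; your bookkeeping of the coefficients is correct and yields (\ref{T1.16}).
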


\begin{proof}
If we apply the operator $\Pi$ to (\ref{T1.14}) and employ (\ref{T1.10}) we
obtain (\ref{T1.15}) with%
\[
B=\left\langle \left(  \delta_{ij}+2\left(  k-1\right)  n_{i}n_{j}\right)
\psi_{0}+n_{j}\frac{\delta\psi_{0}}{\delta x_{j}}+n_{i}\frac{\delta\psi_{0}%
}{\delta x_{j}},1\right\rangle _{\mathcal{D}^{\prime}\left(  \mathbb{S}%
\right)  \times\mathcal{D}\left(  \mathbb{S}\right)  }.
\]
But \cite[(2.6)]{YE2} yields%
\begin{equation}
\left\langle n_{j}\frac{\delta\psi_{0}}{\delta x_{j}},1\right\rangle
_{\mathcal{D}^{\prime}\left(  \mathbb{S}\right)  \times\mathcal{D}\left(
\mathbb{S}\right)  }=\left\langle \psi_{0},n\,n_{i}n_{j}-\delta_{ij}%
\right\rangle _{\mathcal{D}^{\prime}\left(  \mathbb{S}\right)  \times
\mathcal{D}\left(  \mathbb{S}\right)  }, \label{T1.17}%
\end{equation}
and (\ref{T1.16}) follows since $k=2-n.$\smallskip
\end{proof}

We would like to observe that while $\psi_{0}$ has been supposed smooth, a
continuity argument immediately gives that $\psi_{0}$ could be any
distribution of $\mathcal{D}^{\prime}\left(  \mathbb{R}^{n}\setminus\left\{
\mathbf{0}\right\}  \right)  $ that is homogeneous of degree $0.$

\section{Bowen's formula\label{Section; Bowen}}

If we apply formula (\ref{T1.6}) to the function $\psi=n_{j_{1}}\cdots
n_{j_{k}}/r^{2},$ which is homogeneous of degree $-2$ in $\mathbb{R}^{3}$ we
obtain at once that
\begin{align}
&  \frac{\overline{\partial}}{\partial x_{i}}\left(  \frac{n_{j_{1}}\cdots
n_{j_{k}}}{r^{2}}\right)  =\label{B.1}\\
&  \mathrm{p.v.}\left(  \left\{  \sum_{q=1}^{k}\delta_{ij_{q}}\frac{n_{j_{1}%
}\cdots n_{j_{k}}}{n_{j_{q}}}-\left(  k+2\right)  n_{i}n_{j_{1}}\cdots
n_{j_{k}}\right\}  \frac{1}{r^{3}}\right)  +A\delta\left(  \mathbf{x}\right)
\,,\nonumber
\end{align}
where
\begin{equation}
A=\int_{\mathbb{S}}n_{i}n_{j_{1}}\cdots n_{j_{k}}\,\mathrm{d}\sigma\left(
\mathbf{w}\right)  \,. \label{B.2}%
\end{equation}
This integral was computed in \cite[(3.13)]{EK85b}, the result being
\begin{equation}
A=\frac{2\Gamma\left(  \left(  a+1\right)  /2\right)  \Gamma\left(  \left(
b+1\right)  /2\right)  \Gamma\left(  \left(  c+1\right)  /2\right)  }%
{\Gamma\left(  \left(  a+b+c+3\right)  /2\right)  }\,, \label{B.3}%
\end{equation}
if $n_{i}n_{j_{1}}\cdots n_{j_{k}}=n_{1}^{a}n_{2}^{b}n_{3}^{c},$ and $a,$ $b,$
or $c$ are even, while $A=0$ if any exponent is odd. Bowen \cite[Eqn.
(A5)]{Bowen} also computes the integral, and obtains a different but
equivalent expression; in particular, his formula for $k=3$ reads as%
\begin{equation}
A=\frac{4\pi}{15}\left(  \delta_{ij_{1}}\delta_{j_{2}j_{3}}+\delta_{ij_{2}%
}\delta_{j_{1}j_{3}}+\delta_{ij_{3}}\delta_{j_{1}j_{2}}\right)  \,,
\label{B.3p}%
\end{equation}
so that (\ref{B.3}) or (\ref{B.3p}) would yield that if $\left(  a,b,c\right)
$ is a permutation of $\left(  2,2,0\right)  $ then $A=4\pi/15$ while if a
permutation of $\left(  4,0,0\right)  $ then $A=4\pi/5.$

Our main aim is to point out why the product rule for derivatives, as employed
in \cite{Bowen} does not produce the correct result. Indeed, if we use
\cite[Eqn. (16)]{Bowen} written as\footnote{We shall employ our notation, not
the original one of \cite{Bowen}.}%
\begin{equation}
\frac{\overline{\partial}}{\partial x_{i}}\left(  \frac{n_{j_{1}}}{r^{2}%
}\right)  =\mathrm{p.v.}\left(  \frac{\delta_{ij_{1}}-3n_{i}n_{j_{i}}}{r^{3}%
}\right)  +\frac{4\pi}{3}\delta_{ij_{1}}\delta\left(  \mathbf{x}\right)  \,,
\label{B.4}%
\end{equation}
and then try to proceed as in \cite[Eqn. (18)]{Bowen},%
\begin{equation}
\frac{\overline{\partial}}{\partial x_{i}}\left(  \frac{n_{j_{1}}n_{j_{2}%
}n_{j_{3}}}{r^{2}}\right)  \text{\textquotedblleft\textquestiondown }%
=\text{?\textquotedblright\ }n_{j_{1}}n_{j_{2}}\frac{\overline{\partial}%
}{\partial x_{i}}\left(  \frac{n_{j_{3}}}{r^{2}}\right)  +\frac{n_{j_{3}}%
}{r^{2}}\frac{\overline{\partial}}{\partial x_{i}}\left(  n_{j_{1}}n_{j_{2}%
}\right)  \,. \label{B.5}%
\end{equation}
Thus (\ref{B.4}) and the formula%
\begin{equation}
\frac{\overline{\partial}}{\partial x_{i}}\left(  n_{j_{1}}n_{j_{2}}\right)
=\frac{\delta_{ij_{1}}n_{j_{2}}+\delta_{ij_{2}}n_{j_{1}}-2n_{i}n_{j_{1}%
}n_{j_{2}}}{r}\,, \label{B.6}%
\end{equation}
give%
\begin{equation}
n_{j_{1}}n_{j_{2}}\frac{\overline{\partial}}{\partial x_{i}}\left(
\frac{n_{j_{3}}}{r^{2}}\right)  +\frac{n_{j_{3}}}{r^{2}}\frac{\overline
{\partial}}{\partial x_{i}}\left(  n_{j_{1}}n_{j_{2}}\right)
=\text{\textquotedblleft Normal\textquotedblright}+\text{\textquotedblleft
Src\textquotedblright}\,, \label{B.7}%
\end{equation}
where%
\begin{equation}
\text{\textquotedblleft Normal\textquotedblright}=\mathrm{p.v.}\left(
\frac{\delta_{ij_{1}}n_{j_{2}}n_{j_{3}}+\delta_{ij_{2}}n_{j_{1}}n_{j_{3}%
}+\delta_{ij_{3}}n_{j_{1}}n_{j_{2}}-5n_{i}n_{j_{1}}n_{j_{2}}n_{j_{3}}}{r^{3}%
}\right)  \,, \label{B.8}%
\end{equation}
coincides with the first term of (\ref{B.1}) while
\begin{equation}
\text{\textquotedblleft Src\textquotedblright}=\frac{4\pi}{3}\delta_{ij_{3}%
}n_{j_{1}}n_{j_{2}}\delta\left(  \mathbf{x}\right)  \,. \label{B.9}%
\end{equation}
The right hand side of (\ref{B.9}) is not a well defined distribution, of
course, but Bowen suggested that we treat it as what we now call the
projection of a thick distribution, that is, as
\begin{equation}
\text{\textquotedblleft Src\textquotedblright}=\Pi\left(  \frac{4\pi}{3}%
\delta_{ij_{3}}n_{j_{1}}n_{j_{2}}\delta_{\ast}\right)  =\frac{4\pi}{9}%
\delta_{ij_{3}}\delta_{j_{1}j_{2}}\delta\left(  \mathbf{x}\right)  \,,
\label{B.10}%
\end{equation}
since $\Pi\left(  n_{j_{1}}n_{j_{2}}\delta_{\ast}\right)  =\left(  1/3\right)
\delta_{j_{1}j_{2}}\delta\left(  \mathbf{x}\right)  $ \cite[Example 5.10]%
{YE2}. In order to compare with (\ref{B.1}) and (\ref{B.3p}) we observe that
by symmetry the same result would be obtained if $j_{3}$ and $j_{1},$ or
$j_{3}$ and $j_{2},$ are exchanged, so that if in the term \textquotedblleft
Src\textquotedblright\ we do these exchanges, add the results and divide by
$3,$ we would get
\begin{equation}
\text{\textquotedblleft SrcSym\textquotedblright}=\frac{4\pi}{27}\left(
\delta_{ij_{1}}\delta_{j_{2}j_{3}}+\delta_{ij_{2}}\delta_{j_{1}j_{3}}%
+\delta_{ij_{3}}\delta_{j_{1}j_{2}}\right)  \delta\left(  \mathbf{x}\right)
\,, \label{B.11}%
\end{equation}
and thus the symmetric version of the (\ref{B.7}) is \textquotedblleft
Normal\textquotedblright$+$\textquotedblleft SrcSym\textquotedblright, which
of course is different from (\ref{B.1}) since the coefficient in (\ref{B.3p})
is $4\pi/15,$ while that in (\ref{B.11}) is $4\pi/27.$ Therefore, the relation
\textquotedblleft\textquestiondown $=$?\textquotedblright\ in (\ref{B.5})
cannot be replaced by $=.$

Hence the product rule for derivatives fails in this case. \emph{The question
is why? }Indeed, when computing the right side of (\ref{B.5}), that is, the
left side of (\ref{B.7}), we found just one irregular product, namely
$n_{j_{1}}n_{j_{2}}\delta\left(  \mathbf{x}\right)  ,$ but using the average
value $\left(  1/3\right)  \delta_{j_{1}j_{2}}\delta\left(  \mathbf{x}\right)
$ seems quite reasonable.

In order to see what went wrong let us compute $\overline{\partial}/\partial
x_{i}\left(  n_{j_{1}}n_{j_{2}}n_{j_{3}}/r^{2}\right)  $ by computing the
thick derivative $\partial^{\ast}/\partial x_{i}\mathcal{P}f\left(  n_{j_{1}%
}n_{j_{2}}n_{j_{3}}/r^{2}\right)  ,$ applying the product rule for thick
derivatives, and then taking the projection $\pi$ of this. We have,%
\begin{align*}
\frac{\partial^{\ast}}{\partial x_{i}}\mathcal{P}f\left(  \frac{n_{j_{1}%
}n_{j_{2}}n_{j_{3}}}{r^{2}}\right)   &  =\frac{\partial^{\ast}}{\partial
x_{i}}\left[  n_{j_{1}}n_{j_{2}}\mathcal{P}f\left(  \frac{n_{j_{3}}}{r^{2}%
}\right)  \right] \\
&  =n_{j_{1}}n_{j_{2}}\frac{\partial^{\ast}}{\partial x_{i}}\mathcal{P}%
f\left(  \frac{n_{j_{3}}}{r^{2}}\right)  +\frac{\partial\left(  n_{j_{1}%
}n_{j_{2}}\right)  }{\partial x_{i}}\mathcal{P}f\left(  \frac{n_{j_{3}}}%
{r^{2}}\right)  \,,
\end{align*}
and taking (\ref{T1.4}) into account, we obtain
\begin{align*}
&  n_{j_{1}}n_{j_{2}}\left\{  \mathcal{P}f\left(  \frac{\delta_{ij_{3}}%
-3n_{i}n_{j_{3}}}{r^{3}}\right)  +4\pi n_{j_{3}}n_{i}\delta_{\ast}\right\} \\
&  \ \ \ \ \ \ \ \ \ \ \ \ \ \ \ \ \ \ \ \ \ \ \ \ +\frac{\delta_{ij_{1}%
}n_{j_{2}}+\delta_{ij_{2}}n_{j_{1}}-2n_{i}n_{j_{1}}n_{j_{2}}}{r}%
\mathcal{P}f\left(  \frac{n_{j_{3}}}{r^{2}}\right)  \,,
\end{align*}
that is, $\partial^{\ast}/\partial x_{i}\mathcal{P}f\left(  n_{j_{1}}n_{j_{2}%
}n_{j_{3}}/r^{2}\right)  $ equals%
\begin{align}
&  \mathcal{P}f\left(  \frac{\delta_{ij_{1}}n_{j_{2}}n_{j_{3}}+\delta_{ij_{2}%
}n_{j_{1}}n_{j_{3}}+\delta_{ij_{3}}n_{j_{1}}n_{j_{2}}-5n_{i}n_{j_{1}}n_{j_{2}%
}n_{j_{3}}}{r^{3}}\right) \label{B.12}\\
&  \ \ \ \ \ \ \ \ \ \ \ \ \ \ \ \ \ \ \ \ \ \ \ \ +4\pi n_{j_{1}}n_{j_{2}%
}n_{j_{3}}n_{i}\delta_{\ast}\,.\nonumber
\end{align}
Applying the projection operator $\Pi$ we obtain that the $\mathcal{P}f$
becomes a $\mathrm{p.v.}$, so that the term \textquotedblleft
Normal\textquotedblright\ given by (\ref{B.8}) is obtained, while
(\ref{T1.10}) yields that the projection of thick delta is exactly
$A\delta\left(  \mathbf{x}\right)  $ where $A=\int_{\mathbb{S}}n_{i}n_{j_{1}%
}n_{j_{2}}n_{j_{3}}\,\mathrm{d}\sigma\left(  \mathbf{w}\right)  ,$ that is,
the \emph{correct} term
\[
\frac{4\pi}{15}\left(  \delta_{ij_{1}}\delta_{j_{2}j_{3}}+\delta_{ij_{2}%
}\delta_{j_{1}j_{3}}+\delta_{ij_{3}}\delta_{j_{1}j_{2}}\right)  \delta\left(
\mathbf{x}\right)  \,.
\]
The reason we now obtain the correct result is while it is true that
$\Pi\left(  n_{j_{1}}n_{j_{2}}\delta_{\ast}\right)  =\left(  1/3\right)
\delta_{j_{1}j_{2}}\delta\left(  \mathbf{x}\right)  $ and that $\Pi\left(
n_{j_{3}}n_{i}\delta_{\ast}\right)  =\left(  1/3\right)  \delta_{ij_{3}}%
\delta\left(  \mathbf{x}\right)  ,$ it is \emph{not} true that the projection
$\Pi\left(  4\pi n_{j_{1}}n_{j_{2}}n_{j_{3}}n_{i}\delta_{\ast}\right)  $ can
be obtained as $4\pi\left(  1/3\right)  \delta_{ij_{3}}\Pi\left(  n_{j_{1}%
}n_{j_{2}}\delta_{\ast}\right)  $ nor as $4\pi\left(  1/3\right)
\delta_{j_{1}j_{2}}\Pi\left(  n_{j_{3}}n_{i}\delta_{\ast}\right)  ,$ and
actually not even the symmetrization of such results, given by (\ref{B.11}),
works. Put in simple terms, it is not true that the average of a product is
the product of the averages!

One can, alternatively, compute $\partial^{\ast}/\partial x_{i}\mathcal{P}%
f\left(  n_{j_{1}}n_{j_{2}}n_{j_{3}}/r^{2}\right)  $ as%
\begin{equation}
\frac{\partial}{\partial x_{i}}\left(  \frac{n_{j_{3}}}{r^{2}}\right)
\mathcal{P}f\left(  n_{j_{1}}n_{j_{2}}\right)  +\left(  \frac{n_{j_{3}}}%
{r^{2}}\right)  \frac{\partial^{\ast}}{\partial x_{i}}\mathcal{P}f\left(
n_{j_{1}}n_{j_{2}}\right)  \,, \label{B.13}%
\end{equation}
since%
\begin{equation}
\frac{\partial^{\ast}}{\partial x_{i}}\mathcal{P}f\left(  n_{j_{1}}n_{j_{2}%
}\right)  =\mathcal{P}f\left(  \frac{\delta_{ij_{1}}n_{j_{2}}+\delta_{ij_{2}%
}n_{j_{1}}-2n_{i}n_{j_{1}}n_{j_{2}}}{r}\right)  +4\pi n_{j_{1}}n_{j_{2}}%
n_{i}\delta_{\ast}^{\left[  -2\right]  }\,. \label{B.14}%
\end{equation}
Here the thick delta term in (\ref{B.13})\ is $4\pi\left(  n_{j_{3}}%
/r^{2}\right)  n_{j_{1}}n_{j_{2}}n_{i}\delta_{\ast}^{\left[  -2\right]  },$
which becomes, as it should, $4\pi n_{j_{1}}n_{j_{2}}n_{j_{3}}n_{i}%
\delta_{\ast}.$

Complications in the use of the product rule for derivatives in one variable
were considered in \cite{EF07} when analysing the formula \cite{Vibet}%
\begin{equation}
\frac{d}{dx}\left(  H^{n}\left(  x\right)  \right)  =nH^{n-1}\left(  x\right)
\delta\left(  x\right)  \,, \label{B.15}%
\end{equation}
\ where $H$ is the Heaviside function; see also \cite{Paskusz}.

\section{Higher order derivatives\label{Section: Higher order derivatives}}

We now consider the computation of higher order derivatives in the space
$\left(  \mathcal{D}_{\ast}^{\left[  0\right]  }\left(  \mathbb{R}^{n}\right)
\right)  ^{\prime}.$ If $f\in\mathcal{D}_{\ast}^{\prime}\left(  \mathbb{R}%
^{n}\right)  $ then, of course, the thick derivative $\partial^{\ast
}f/\partial x_{i}$ is defined by duality, that is,
\begin{equation}
\left\langle \frac{\partial^{\ast}f}{\partial x_{i}},\phi\right\rangle
=-\left\langle f,\frac{\partial\phi}{\partial x_{i}}\right\rangle \,,
\label{h.1}%
\end{equation}
for $\phi\in\mathcal{D}_{\ast}\left(  \mathbb{R}^{n}\right)  .$ Suppose now
that $\mathcal{A}$ is a subspace of $\mathcal{D}_{\ast}\left(  \mathbb{R}%
^{n}\right)  $ that has a topology such that the imbedding $i:\mathcal{A}%
\hookrightarrow\mathcal{D}_{\ast}\left(  \mathbb{R}^{n}\right)  $ is
continuous; then the transpose $i^{T}:\mathcal{D}_{\ast}^{\prime}\left(
\mathbb{R}^{n}\right)  \rightarrow\mathcal{A}^{\prime}$ is just the
restriction operator $\Pi_{\mathcal{A}}.$ If $\mathcal{A}$ is closed under the
differentiation operators\footnote{The space $\mathcal{A}^{\prime}$ would be a
space of (thick) distributions in the sense of Zemanian \cite{zem}.}, then we
can also define the derivative of any $f\in\mathcal{A}^{\prime},$ say
$\partial_{\mathcal{A}}f/\partial x_{i},$ by employing (\ref{h.1}) for
$\phi\in\mathcal{A}.$ Then%
\begin{equation}
\Pi_{\mathcal{A}}\left(  \frac{\partial^{\ast}f}{\partial x_{i}}\right)
=\frac{\partial_{\mathcal{A}}}{\partial x_{i}}\left(  \Pi_{\mathcal{A}}\left(
f\right)  \right)  \,, \label{h.2}%
\end{equation}
for any thick distribution $f\in\mathcal{D}_{\ast}^{\prime}\left(
\mathbb{R}^{n}\right)  .$ In the particular case when $\mathcal{A}%
=\mathcal{D}\left(  \mathbb{R}^{n}\right)  $ then $\partial_{\mathcal{A}%
}f/\partial x_{i}=\overline{\partial}f/\partial x_{i},$ the usual
distributional derivative, and thus (\ref{h.2}) becomes \cite[Eqn.
(5.22)]{YE2},%
\begin{equation}
\Pi\left(  \frac{\partial^{\ast}f}{\partial x_{i}}\right)  =\frac
{\overline{\partial}\Pi\left(  f\right)  }{\partial x_{i}}\,. \label{h.3}%
\end{equation}
What this means is that one can use thick distributional derivatives to
compute $\partial_{\mathcal{A}}f/\partial x_{i},$ as we have already done to
compute distributional derivatives.

When $\mathcal{A}$ is not closed under the differentiation operators then
$\partial_{\mathcal{A}}f/\partial x_{i}$ cannot be defined by (\ref{h.1}) if
$f\in\mathcal{A}^{\prime}$ since in general $\partial\phi/\partial x_{i}$ does
not belong to $\mathcal{A}$ and thus the right side of (\ref{h.1}) is not
defined. However, if $f\in\mathcal{A}^{\prime}$ has a \emph{canonical}
extension $\widetilde{f}\in\mathcal{D}_{\ast}^{\prime}\left(  \mathbb{R}%
^{n}\right)  $ then we could define $\partial_{\mathcal{A}}f/\partial x_{i}$
as $\Pi_{\mathcal{A}}\left(  \partial^{\ast}\widetilde{f}/\partial
x_{i}\right)  .$ This applies, in particular when $\mathcal{A}=\mathcal{D}%
_{\ast}^{\left[  0\right]  }\left(  \mathbb{R}^{n}\right)  :$ if $f\in\left(
\mathcal{D}_{\ast}^{\left[  0\right]  }\left(  \mathbb{R}^{n}\right)  \right)
^{\prime}$ then $\partial_{0}^{\ast}f/\partial x_{i}=\partial_{\mathcal{A}%
}f/\partial x_{i}$ cannot be defined, in general, but if $f$ has a canonical
extension $\widetilde{f}\in\mathcal{D}_{\ast}^{\prime}\left(  \mathbb{R}%
^{n}\right)  $\ then $\partial_{0}^{\ast}f/\partial x_{i}$ is understood as
$\Pi_{\mathcal{D}_{\ast}^{\left[  0\right]  }\left(  \mathbb{R}^{n}\right)
}\left(  \partial^{\ast}\widetilde{f}/\partial x_{i}\right)  .$

Our aim is to point out that, in general, if $P=RS$ is the product of two
differential operators with constant coefficients, then while, with obvious
notations, $P^{\ast}=R^{\ast}S^{\ast},$ $P_{\mathcal{A}}=R_{\mathcal{A}%
}S_{\mathcal{A}},$ if $\mathcal{A}$ is closed under differential operators,
and $\overline{P}=\overline{R}\,\overline{S},$ it is \emph{not} true that
$P_{0}^{\ast}=R_{0}^{\ast}S_{0}^{\ast}.$ Therefore the space $\left(
\mathcal{D}_{\ast}^{\left[  0\right]  }\left(  \mathbb{R}^{n}\right)  \right)
^{\prime}$ is not a convenient framework to generalize distributions to thick
distributions; the whole $\mathcal{D}_{\ast}^{\prime}\left(  \mathbb{R}%
^{n}\right)  $ is needed if we want a theory that includes the possibility of
differentiation.\smallskip

\begin{example}
\label{Example h.1}Let us consider the second order derivatives of the
distribution $\mathcal{P}f\left(  1\right)  .$ Formula (\ref{T1.14}) yields%
\begin{equation}
\frac{\partial^{\ast2}}{\partial x_{i}\partial x_{j}}\left(  \mathcal{P}%
f\left(  1\right)  \right)  =C\left(  \delta_{ij}-2n_{i}n_{j}\right)
\delta_{\ast}^{\left[  -n+2\right]  }. \label{H.4}%
\end{equation}
In particular, in $\mathbb{R}^{2},$ $\partial^{\ast2}/\partial x_{i}\partial
x_{j}\left(  \mathcal{P}f\left(  1\right)  \right)  =2\pi\left(  \delta
_{ij}-2n_{i}n_{j}\right)  \delta_{\ast}.$ If we consider the function $1$ as
an element of $\left(  \mathcal{D}_{\ast}^{\left[  0\right]  }\left(
\mathbb{R}^{2}\right)  \right)  ^{\prime}$ then it has the canonical extension
$\mathcal{P}f\left(  1\right)  \in\mathcal{D}_{\ast}^{\prime}\left(
\mathbb{R}^{2}\right)  $ and so%
\[
\frac{\partial_{0}^{\ast}\left(  1\right)  }{\partial x_{j}}=\Pi
_{\mathcal{D}_{\ast}^{\left[  0\right]  }\left(  \mathbb{R}^{2}\right)
}\left(  2\pi n_{j}\delta_{\ast}^{\left[  -1\right]  }\right)  =0\,,
\]
and consequently,%
\begin{equation}
\frac{\partial_{0}^{\ast}}{\partial x_{i}}\left(  \frac{\partial_{0}^{\ast
}\left(  1\right)  }{\partial x_{j}}\right)  =\frac{\partial_{0}^{\ast}%
}{\partial x_{i}}\left(  0\right)  =0\neq2\pi\left(  \delta_{ij}-2n_{i}%
n_{j}\right)  \delta_{\ast}=\frac{\partial^{\ast2}\left(  1\right)  }{\partial
x_{i}\partial x_{j}}\,. \label{h.5}%
\end{equation}
Observe that $\Pi\left(  2\pi\left(  \delta_{ij}-2n_{i}n_{j}\right)
\delta_{\ast}\right)  =0,$ but observe also that this means very
little.\smallskip
\end{example}

\begin{example}
\label{Example h.2}It was obtained in \cite[Thm. 7.6]{YE2} that in
$\mathcal{D}_{\ast}^{\prime}\left(  \mathbb{R}^{3}\right)  $
\begin{equation}
\frac{\partial^{\ast2}\mathcal{P}f\left(  r^{-1}\right)  }{\partial
x_{i}\partial x_{j}}=\left(  3x_{i}x_{j}-\delta_{ij}r^{2}\right)
\mathcal{P}f\left(  r^{-5}\right)  +4\pi\left(  \delta_{ij}-4n_{i}%
n_{j}\right)  \delta_{\ast}\,. \label{PD.9}%
\end{equation}
Since $\Pi\left(  n_{i}n_{j}\delta_{\ast}\right)  =\left(  1/3\right)
\delta_{ij}\delta\left(  \mathbf{x}\right)  $ in $\mathbb{R}^{3},$ this yields
the well known formula of Frahm \cite{Frahm}%
\begin{equation}
\frac{\overline{\partial}^{2}}{\partial x_{i}\partial x_{j}}\left(  \frac
{1}{r}\right)  =\mathrm{p.v.}\left(  \frac{3x_{i}x_{j}-r^{2}\delta_{ij}}%
{r^{5}}\right)  -\left(  \frac{4\pi}{3}\right)  \delta_{ij}\delta\left(
\mathbf{x}\right)  \,. \label{PD.10}%
\end{equation}
We also immediately obtain that
\begin{equation}
\frac{\partial_{0}^{\ast2}\mathcal{P}f\left(  r^{-1}\right)  }{\partial
x_{i}\partial x_{j}}=\mathcal{P}f\left(  \frac{3x_{i}x_{j}-r^{2}\delta_{ij}%
}{r^{5}}\right)  +4\pi\left(  \delta_{ij}-4n_{i}n_{j}\right)  \delta_{\ast}\,,
\label{h.6}%
\end{equation}
a formula that can also be proved by other methods \cite{YE3}. On the other
hand, in \cite{Franklin} one can find the computation of%
\begin{equation}
\frac{\partial_{0}^{\ast}}{\partial x_{i}}\left(  \frac{\partial_{0}^{\ast}%
}{\partial x_{j}}\left(  \frac{1}{r}\right)  \right)  =\mathcal{P}f\left(
\frac{3x_{i}x_{j}-r^{2}\delta_{ij}}{r^{5}}\right)  -4\pi n_{i}n_{j}%
\delta_{\ast}\,. \label{h.7}%
\end{equation}
The fact that $\displaystyle\frac{\partial_{0}^{\ast}}{\partial x_{i}}\left(
\frac{\partial_{0}^{\ast}}{\partial x_{j}}\right)  \neq\frac{\partial
_{0}^{\ast2}}{\partial x_{i}\partial x_{j}}$ is obvious in the Example
\ref{Example h.1}, but it is harder to see it in cases like this
one\footnote{That the two results are different is overlooked in
\cite{Franklin}.}. Observe that the projection of both $4\pi\left(
\delta_{ij}-4n_{i}n_{j}\right)  \delta_{\ast}$ and of $-4\pi n_{i}n_{j}%
\delta_{\ast}$ onto $\mathcal{D}^{\prime}\left(  \mathbb{R}^{3}\right)  $\ is
given by $-\left(  4\pi/3\right)  \delta_{ij}\delta\left(  \mathbf{x}\right)
,$ but this does not mean that they are equal; observe also that one needs the
finite part in (\ref{h.6}) and in (\ref{h.7}) since the principal value, as
used in (\ref{PD.10}), exists in $\mathcal{D}^{\prime}\left(  \mathbb{R}%
^{3}\right)  $\ but not in $\left(  \mathcal{D}_{\ast}^{\left[  0\right]
}\left(  \mathbb{R}^{3}\right)  \right)  ^{\prime}.$\smallskip
\end{example}

.

\end{document}